\newtheorem{theorem}{Theorem}
\newtheorem{lemma}{Lemma}
\newtheorem{definition}{Definition}
\newcommand{\one}{\mathds{1}}
\title{Emergence of Objectivity for Quantum Many-Body Systems}
\author{Harold Ollivier} %MDPI: Please carefully check the accuracy of names and affiliations and make sure all appear correct in the tex. %HO: Name is correct
\date{INRIA, 2 rue Simone Iff, F-75012 Paris, France. \\[2ex] \today}
\begin{document}

\maketitle
\begin{abstract}
  We examine the emergence of objectivity for quantum
  many-body systems in a setting without an environment to decohere the
  system's state, but where observers can only access small fragments
  of the whole system. We extend the result of Reidel (2017)
  to  the case where the system is in a mixed state, measurements are
  performed through POVMs, and imprints of the outcomes are
  imperfect. We introduce a new condition on states and measurements
  to recover full classicality for any number of observers. We further
  show that evolutions of quantum many-body systems can be expected to
  yield states that satisfy this condition whenever the %HO: such a -> this. I want to make sure it refers to the "new condition" in the previous sentence
  corresponding measurement outcomes are redundant.
\end{abstract}

\section{Introduction}
\label{sec:org3d293ae}

{The emergence of classical reality from within a quantum mechanical
universe has always been central to discussions on the foundations of
quantum theory.  While decoherence---through interactions of a
quantum system with its environment---accounts for the disappearance
of superpositions of quantum
states \cite{Z81:pointer,Z82:environment,JZ85}, it does not provide an
{a priori} explanation for all intrinsic properties of a
classical world and, in particular, for the emergence of an objective
classical reality.}

Quantum Darwinism %MDPI: MDPI: Is the italic format necessary? if no, please remove the italic format. Please kindly check it in the whole paper and make sure they appear in correct format. %HO: fine without italic. 
\cite{OPZ04:objective, OPZ05:environment,
  BZ05:simple, BZ06:quantum, BZ08:quantum, Z09:quantum, Z14:quantum,
  Z18:quantum} proposes a solution to fill this gap. Its {credo}
states that rather than interacting directly with systems of interest,
observers intercept a small fraction of their environment to gather
information about them.  Classicality then emerges naturally from
quantum Darwinism.  First, observing the system of interest \(\mathcal
S\) indirectly, by measuring its environment \(\mathcal E\) rather
than directly with an apparatus, restricts obtainable information to
observables on \(\mathcal S\) that are faithfully recorded in the
environment. In practice, these observables are commuting with the
well-defined preferred pointer basis induced by decoherence due to the
interaction Hamiltonian between \(\mathcal S\) and \(\mathcal
E\). Second, requiring the observer to be able to infer the state of
\(\mathcal S\) by measuring only a small fraction of \(\mathcal E\)
implies that many such observers can do the same without modifying the
state of the system.  This, in turn, grants the state of the system an
objective existence, as it can be discovered and agreed upon by many
observers.

While early descriptions of quantum
Darwinism \cite{OPZ04:objective,OPZ05:environment} focused on simple
models to build intuition, several subsequent works have studied the
redundancy of information in more complex settings.
References \cite{BZ08:quantum,RZ10:quantum,KHH14:objectivity,ZRZ14:amplification,ZRZ16:amplification}
show that quantum Darwinism---through the redundant proliferation of
information about the pointer states in the environment---is a rather
ubiquitous phenomenon encountered in many realistic situations.

The models used above to exemplify quantum Darwinism consider that the
whole universe can be naturally split between \(\mathcal S\), the system
of interest, and \(\mathcal E\), the environment itself subdivided into
subsystems \(\mathcal E = \cup_i \mathcal E_i\). As a consequence, the
emergence of classicality is {de facto} analyzed relative to this
separation. Redundant information is sought about observables on
\(\mathcal S\) in \(\mathcal E\).  Yet, this is already going beyond what
seems to be the minimal requirement that should allow to recover
classical features of the universe: a natural egalitarian
tensor--product structure for the state space, without explicit
reference to a preferred system--environment dichotomy.

Such a scenario is particularly relevant for the Consistent Histories %HO: I think it should be capitalized as it is a concept per se, not just an adjective and a noun together. But if this shouldn't be done in English, I'm fine.
framework \cite{G84:consistent,O94:interpretation,GO99:consistent}.
The universe is viewed as a closed quantum system in which one wants
to identify a single set of consistent histories that describe the
quasi-classical domain, where emergent coarse-grained observables
follow the classical equations of motion \cite{RZZ16:objective}, and
become objective for observers embedded in the quantum universe. In a
similar fashion, this scenario is adapted for understanding the
emergence of objective properties in many-body physics.  The reason is
that for such composite systems, quantum fluctuations can be recorded
into complex mesoscopic regions, e.g., in the course of their
amplification by classically chaotic systems.  Hence, redundant
information need not be relative to observables of a single subsystem
or any predefined set of subsystems, but rather to observables of
to-be-determined sets of subsystems. Ref. \cite{R17:classical}
examines this question and shows that, due to the absence of a fixed
set of subsystems defining the system of interest \(\mathcal S\), it is
possible to construct redundant records for two mutually incompatible
observables.  While this gives a clear example where redundancy of
information is not enough to guarantee the uniqueness of objective
observables, the main result of \cite{R17:classical} shows that this
ambiguity requires the redundant records to delicately overlap with
one another. In practical situations, such a delicate overlap is
expected to be unlikely, thereby recovering the usual uniqueness of
objective observables.

The present work shows that a similar conclusion can be expected in a
more general setting, where redundant records are not required to be
perfectly imprinted in the Hilbert space of the whole universe and
where observables are replaced with POVMs (Throughout this
paper, observable refer to sharp observables, so that POVMs are a
generalization of observables).   %MDPI: The footnote is not permitted in our journal (entropy), we moved it in the main text, please confirm this kind in the whole paper. %HO: Ok.
 To this end, Section \ref{sec:ideal}
presents an overview of \cite{R17:classical} and outlines some of
the key ingredients used implicitly when relying on perfect redundant
records of observables. Section \ref{sec:generalize} generalizes
the tools defining redundancy and classicality to our %HO: I removed the added "the" before redundancy, as I don't want to refer to the redundancy of some definite information, but redundancy as a indefinite concept. Like in "this car is redefining travel as we know it". Or do you say "this car is redefining the travel as we know it"? Againg if this doesn't work in English, please put the correct "the" back. 
scenario. Section \ref{sec:tuple} provides a sufficient criterion
on the approximate redundant records to recover classicality for a %HO: same as above. Removed "the" in front of classicality. It's is not the classicality of the set of POVMs, but rather the classicality of the whole system. 
single set of POVMs on \(\mathcal S\). Finally,
Section \ref{sec:dynamical} takes a dynamical perspective to the
emergence of objectivity and shows that our criterion is expected to
hold in a wide range of situations, thereby implying that quantum
Darwinism is a ubiquitous explanation for the emergence of classical
properties in quantum many-body systems.

\section{Objectivity for Idealized Quantum Many-Body Systems \label{sec:ideal}}
\label{sec:org222f6a6}

In Ref. \cite{R17:classical}, an archetypal quantum many-body
system is introduced to study the emergence of objective properties.
It consists of a quantum system \(\mathcal S\) composed of a collection
of microscopic quantum systems \(\mathcal S = \cup_{i =1}^N \mathcal
S_i\). As a consequence, the Hilbert space \(\mathcal H_{\mathcal S}\) of
\(\mathcal S\) has a natural tensor--product structure, \(\mathcal
H_{\mathcal S} = \bigotimes_i {\mathcal H}_{\mathcal S_i}\).

Objective classical properties for \(\mathcal S\) are expected to emerge
from redundant imprints that are accessible to observers using %HO: removed "," as it is accessible to the observers that are using feasible measurements.  
feasible measurements on fractions of \(\mathcal S\). More precisely,
assuming \(\mathcal S\) is in a pure state \(\ket \psi\), redundant
observables should induce a decomposition of \(\ket \psi\) into
orthogonal but un-normalized branches \(\ket {\psi_i}\) 
\begin{equation}\label{eq:common_branch}
\ket \psi = \sum_i \ket {\psi_i},
\end{equation}
each \(\ket{\psi_i}\) being a common eigenstate of the redundant
observables. This implies that, for measurements on fractions of
\(\mathcal S\), this coherent superposition is indistinguishable from
the incoherent classical mixture \(\sum_i \ketbra{\psi_i}\), thus
forbidding observers to experience the quantumness of the correlations
between fragments of \(\mathcal S\).

The similarity with quantum Darwinism should be clearly apparent: for
both, not all subsystems can be measured simultaneously, thus forcing
partial observations. In the presence of faithful redundant imprints, this
would allow several observers to agree on their measurement results,
thereby granting those records and associated observables an objective
existence.

However, the similarity stops here. For quantum many-body systems, one
cannot readily conclude that evolutions inducing faithful redundant
imprints will favor the emergence of a \emph{single} set of redundant
observables, contrarily to usual system--environment
settings \cite{BPH15:generic}. The reason for such difference stems
from the absence, in the many-body setting, of precise localization
for the redundant records themselves.

For instance, in Ref. \cite{BPH15:generic}, although the choice
of one subsystem of the whole universe for playing the role of
reference system is arbitrary---any other would be equivalent for the
purpose of the conducted analysis---it is clearly identified, and the
redundant imprints refer to a measurement record of an observable for
this specific subsystem. Therefore, comparisons between the
conclusions drawn for different choices of the reference subsystem cannot
be made. Even more strikingly, Ref. \cite{R17:classical} gives a
concrete example of two redundantly recorded, yet non-commuting,
observables for \(\mathcal S\). One or the other could then equally
pretend to be objective, while their combination does not allow the
branch decomposition of Equation (\ref{eq:common_branch}).

To see this, consider \(\mathcal S\) made of qubits \(\mathcal S_{i,j}\)
where \((i,j) \in [1,N]\times [1,N]\).  The state of \(\mathcal S\) is
prepared by applying a CPTP map \(\Lambda\) from a single qubit to
\(\mathcal S\) and defined in the following way:
\begin{align*}
\ket{0}
& \rightarrow \ket{\bar 0} = \frac{1}{\sqrt{2^N}} \bigotimes_{i=1}^{N} \left( \bigotimes_{j=1}^{N} \ket{0}_{i,j} + \bigotimes_{j=1}^{N}\ket{1}_{i,j} \right) \\
\ket{1}
& \rightarrow \ket{\bar 1} = \frac{1}{\sqrt{2^N}} \bigotimes_{i=1}^{N} \left( \bigotimes_{j=1}^{N} \ket{0}_{i,j} - \bigotimes_{j=1}^{N}\ket{1}_{i,j} \right).
\end{align*}

Clearly, for fixed \(i\), the measurement of the qubits labeled
\(\{(i,j), j \in [1,N]\}\) in the basis \((\bigotimes_{j=1}^{N}
\ket{0}_{i,j} \pm \bigotimes_{j=1}^{N}\ket{1}_{i,j})/\sqrt{2}\) is %HO: It is the measurement that is equivalent
equivalent to the measurement of the whole system relative to the basis %HO: on felt weird. But feel free to revert.
\(\{\ket{\bar 0}, \ket{\bar 1}\}\).  This means that the information
about the observable \(\bar Z = \ketbra{\bar 0} - \ketbra{\bar 1}\) is
perfectly imprinted \(N\) times in \(\mathcal S\).

In addition, one can also rewrite the vectors \(\ket{\bar 0}\) and
\(\ket{\bar 1}\):
\begin{align}
\ket{\bar 0}
& = \frac{1}{\sqrt{2^N}} \bigotimes_{i=1}^N \left(\bigotimes_{j=1}^{N} \ket{0}_{i,j} + \bigotimes_{j=1}^{N}\ket{1}_{i,j} \right) \nonumber \\
& = \frac{1}{\sqrt{2^N}} \sum_{b=0}^{2^N -1} \bigotimes_{j=1}^{N} \ket{b_j} \label{eq:0_rewrite} \\
\ket{\bar 1}
& = \frac{1}{\sqrt{2^N}} \bigotimes_{i=1}^N \left(\bigotimes_{j=1}^{N} \ket{0}_{i,j} - \bigotimes_{j=1}^{N}\ket{1}_{i,j} \right) \nonumber \\
& = \frac{1}{\sqrt{2^N}} \sum_{b=0}^{2^N -1} \bigotimes_{j=1}^{N} (-1)^b \ket{b_j}, \label{eq:1_rewrite}
\end{align}
where, for a given \(b\) written as a binary string \(b = (b_1, \ldots,
b_{N})\), \(\ket{b} = \bigotimes_{i=1}^{N} \ket{b_i}_i\).  Combining
Equations (\ref{eq:0_rewrite}) and (\ref{eq:1_rewrite}), the conjugate
basis has a simple expression:
\begin{align*}
\frac{\ket{\bar 0} + \ket{\bar 1}}{\sqrt 2}
& = \frac{1}{\sqrt{2^N}} \sum_{\underset{h(b):\mathrm{even}}{b \in [0,2^N -1]}} \bigotimes_{j=1}^{N} \ket{b_j} \\
\frac{\ket{\bar 0} - \ket{\bar 1}}{\sqrt 2}
& = \frac{1}{\sqrt{2^N}} \sum_{\underset{h(b):\mathrm{odd}}{b \in [0,2^N -1]}} \bigotimes_{j=1}^{N} \ket{b_j},
\end{align*}
where \(h(b)\) denotes the Hamming weight of \(b\). As a consequence
of this rewrite, for fixed \(j\), any measurement of qubits labeled
\(\{(i,j), i \in [0,N]\}\) that reveals the parity of the weight of
\(b_j\) is equivalent to a measurement of the conjugate observable
\(\bar X = \ketbra{\bar 0}{\bar 1} + \ketbra{\bar 1}{\bar 0}\).
Hence, the information about \(\bar X\) is perfectly imprinted \(N\)
times in \(\mathcal S\), leading to an apparent paradox. Each \(\bar
X\) and \(\bar Z\) defines a set of redundantly imprinted observables,
yet each set is incompatible with the other. The measurement results
that can be gathered by observers measuring the redundant imprints
cannot be explained by resorting to a classical mixture of orthogonal
states. Here, redundancy is not enough to imply the classicality of
observables.

Nonetheless, it should be noted that both observables cannot be
measured simultaneously by different observers in spite of their
redundancy.  This is because any redundant record of \(\bar X\) and %HO: there are redundant records for X and redundant records of Z. Pick one for X and one for Z and they will overlap for exactly one qubit. And the measurements for this overlapped qubit are incompatible.
any reduncant record of \(\bar Z\) overlap in exactly one qubit and require incompatible
measurements for this specific qubit.  Thus, it is not possible to
share one redundant record of \(\bar Z\) with one observer, and one
about \(\bar X\) with another.  It is also not possible to have the
first observer perform a non-destructive measurement of \(\bar Z\) on
its part of \(\mathcal S\) and pass the overlapping qubit to the
second observer so that he/she measures \(\bar X\): the first measurement
already destroys the needed coherence for the second.

This remark is the core of the main result of \cite{R17:classical} for
recovering objectivity for quantum many-body systems. A sufficient %HO: Same remark as above regarding the redundancy vs redundancy.
criterion is introduced to guarantee that any two redundant records in
\(\mathcal S\), possibly corresponding to different observables \(F\)
and \(G\), can always be measured in any order and yet yield
compatible results. More precisely, it ensures that the state \(\ket
\psi\) of the whole system \(\mathcal S\) can be written as \(\ket
\psi = \sum_i \ket{\psi_i}\), where each \(\ket{\psi_i}\) is a
simultaneous eigenstate of \(F\) and \(G\), thereby ensuring the
orthogonality of the \(\ket{\psi_i}\) and the indistinguishability
between \(\ket \psi\) and \(\sum_i \ketbra{\psi_i}\) for feasible
measurements.

To make this formal (see~\cite{R17:classical} for details), suppose %HO: changed the text to avoid having a full sentence between parentheses
 \(\bm{F} = \{F_f\}_{f\in \mathcal
  F}\) and \(\bm G = \{G_g\}_{g\in\mathcal G}\) are two sets of
redundantly recorded observables on \(\mathcal S\) with respect to the
corresponding partitions \(\mathcal F\) and \(\mathcal G\) of the
microscopic sites \(\mathcal S_i\) of \(\mathcal S\).  This means that
for each element \(f\in \mathcal F\), there exists an observable \(F_f
\in \bm F\) on \(f\) that can be decomposed into projectors
\(\{F^\alpha_f\}_\alpha\) where \(\alpha\) is an eigenvalue of
\(F_f\) such that %HO: the previous formulation (, and such that) was meant to make it clearer that "that" was referring to the observable F_f, and not to the eigenvalue. I'm afraid the current form is more ambiguous.
\begin{equation*}
\forall \alpha, \ \forall f' \in \mathcal{F}, \quad F^\alpha_f  \ket \psi = F^\alpha_{f'} \ket \psi,
\end{equation*}
and similarly for \(G_g\in\bm G\) on \(g \in \mathcal G\) with
projectors \(\{G^\mu_g\}_\mu\) associated to eigenvalues \(\mu\) of
\(G_g\).  Then, a sufficient condition on \(\mathcal F\) and
\(\mathcal G\) to ensure that results of \(F_f\) on \(f\) are
compatible with those of \(G_g\) on \(g\), for all values of $f$ and
$g$, is that for all \(f,f' \in \mathcal F\), there exists \(g \in
\mathcal G\), possibly depending on \(f\), and \(f'\) such that \(f
\cap g = f' \cap g = \emptyset\), and vice versa with the roles of
\(\mathcal F\) and \(\mathcal G\) permuted.  This property is called
\emph{non pair-covering} of \(\mathcal F\) and \(\mathcal
G\)~\cite{R17:classical}.

As a result, when $\mathcal F$ and $\mathcal G$ are not pair covering
each other, we have 
\begin{align*}
\forall f,f' \in \mathcal F,\ \exists g \in \mathcal G, \quad  F^\alpha_f G^\mu_g \ket \psi = F^\alpha_{f'} G^\mu_g \ket \psi, \\
\forall g,g' \in \mathcal G,\ \exists f \in \mathcal F, \quad  G^\mu_g F^\alpha_f \ket \psi = G^\mu_{g'} F^\alpha_f \ket \psi.
\end{align*}

In essence, this means that not only are there redundant imprints of
the observables in $\bm F$ in the state \(\ket \psi\) of \(\mathcal S\),
but the redundancy remains even though \(G_g \in \bm G\) is actively measured or
\(\ket\psi\) is decohered as a result of tracing out \(g \in \mathcal G\)
(and the same with the roles of \(\bm F\) and \(\bm G\) permuted).

This is indeed enough to impose the commutation on the support of
\(\ket \psi\): using the same notation, for any \(f\) and \(g\), the
non-pair covering condition gives
\begin{align*}
\exists g', & \quad f  \cap g' = \emptyset \\
\exists f', & \quad f' \cap g  = f' \cap g' = \emptyset.
\end{align*}

 {Then,} 
\begin{align}
F^\alpha_f G^\mu_g \ket \psi
& = F^\alpha_f G^\mu_{g'} \ket \psi \label{eq:commut_1} \\
& = F^\alpha_{f'} G^\mu_{g'} \ket\psi \label{eq:commut_2} \\
& = G^\mu_{g'} F^\alpha_{f'} \ket\psi \label{eq:commut_3} \\
& = G^\mu_{g} F^\alpha_{f'} \ket\psi \label{eq:commut_4} \\
& = G^\mu_{g} F^\alpha_{f}  \ket\psi \label{eq:commut_5},
\end{align}
where Equations (\ref{eq:commut_1}) and (\ref{eq:commut_5}) follow
from the redundancy of records,  Equations (\ref{eq:commut_2})
and (\ref{eq:commut_4}) derive from the non-pair covering condition,
and Equation (\ref{eq:commut_3}) is a direct consequence of the
absence of overlap between \(f'\) and \(g'\).

One can now prove by induction that the same holds for multiple sets
of redundantly imprinted observables \(\bm F, \bm G, \ldots \bm
Z\). Their projectors commute over \(\ket\psi\), allowing to define a
common branch decomposition for the state of the system as prescribed
by Equation (\ref{eq:common_branch}).

\section{Approximate Records and Classicality for Quantum Many-Body Systems \label{sec:generalize}}
\label{sec:orgcd3597f}
The significance of the non-pair covering criterion introduced in the
previous section is due to the relative ease with which it is met in
practice. The overlap that is required to maintain the ambiguity
between redundantly recorded, yet incompatible, observables is too
delicate to happen in realistic physical systems---see \cite{R17:classical} for an extended discussion on this point.

However, this reasoning suffers from several drawbacks. First, the
non-pair covering criterion is applicable only to (sharp) observables
and not to the broader information gathering strategies that can be
implemented using POVMs. Second, redundant observables must be
perfectly imprinted in fragments of \(\mathcal S\). Both restrictions
can be ultimately traced back to how redundancy is measured and how
classicality is deemed, that is, whenever projective measurements are compatible
on the state \(\ket \psi\) of the system, or equivalently, whenever they commute on
the support of \(\ket\psi\). %HO: changed sentence, the previous one was confusing

The paragraphs below address these two points by providing a
definition of approximate redundant records of POVMs and an
alternative witness for their classicality.

\subsection{Approximate Copies of POVM Records}
\label{sec:orgc66d8cf}
Let \(\mathcal S = \cup_{i=1}^N \mathcal S_i\) be a many-body system
with \(N\) microscopic sites. Denote by \(\mathcal F\) a partition of
\([1,N]\) and by \(S_f = \cup_{i \in f} S_i\), for \(f\in
\mathcal F\).

\begin{definition}[\(\delta\)-approximate records]\label{def:perfect_imprint}
For $f, f' \in \mathcal F$ with $f\neq f'$, and two POVMs $F_f = \{F_f^\alpha\}_\alpha$ and $F_{f'}= \{F^\alpha_{f'}\}_\alpha$, respectively, on $S_f$ and $S_{f'}$.
For $\delta > 0$, we say that $F_{f'}$ \(\delta\)-approximately records $F_f$ on the system state $\rho$ if, $\forall \alpha$, %HO: here we are talking about $\dela$-approximation which means approximation within error $\delta$. Please keep the dash to make this unambiguous.
\begin{equation*}
\tr(F^\alpha_f \otimes F^\alpha_{f'} \rho) \geq (1-\delta) \tr(F^\alpha_f \rho).
\end{equation*}
\end{definition}

As expected, this definition captures the fact that, given that outcome
\(\alpha\) is observed by measuring \(F_f\) on \(\rho\), a measurement of
\(F_{f'}\) yields the same outcome \(\alpha\) with a probability of at least
\(1-\delta\). This is because
\begin{equation*}
\Pr(F_{f'}\mbox{ yields outcome }\alpha | F_{f}\mbox{ yields outcome }\alpha) = \frac{\tr(F^\alpha_f \otimes F^\alpha_{f'} \rho)}{\tr(F^\alpha_f \rho)}.
\end{equation*}

When the above property is true for all \(f,f'\in\mathcal F\), we say
that the set of POVMs \(\bm =\{F_f\}_{f\in\mathcal F}\) is \emph{\(|\mathcal F|\)-times
\(\delta\)-approximately redundant}.

The following lemma shows that Definition \ref{def:perfect_imprint} falls back to that of \cite{R17:classical} for \(\delta = 0\), pure system states and (sharp) observables.
\begin{lemma}
Assume $F_f$ and $F_{f'}$ are projective measurements on disjoint subsets $f$ and $f'$ of $[1,N]$, and that $F_{f'}$ 0-approximately records $F_f$ on $\ket \psi$. Then
\begin{equation*}
F^\alpha_f \otimes F^\alpha_{f'} \ket\psi = F^\alpha_f \ket\psi.
\end{equation*}
\end{lemma}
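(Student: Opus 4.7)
The plan is to reduce the claimed operator identity to the statement that a certain positive semidefinite operator annihilates $\ket\psi$, extracted from the trivial vanishing of its expectation value.

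First I would unpack the $\delta=0$ hypothesis. By Definition \ref{def:perfect_imprint} applied with $\rho = \ketbra{\psi}$, one has $\bra\psi F^\alpha_f \otimes F^\alpha_{f'} \ket\psi \geq \bra\psi F^\alpha_f \ket\psi$ for every outcome $\alpha$. But since $F^\alpha_{f'}$ is an element of a POVM, $0 \leq F^\alpha_{f'} \leq I$, and therefore $F^\alpha_f \otimes F^\alpha_{f'} \leq F^\alpha_f \otimes I$, giving the reverse inequality. Thus the inequality is in fact an equality for every $\alpha$.

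Next I would rewrite this equality as $\bra\psi F^\alpha_f \otimes (I - F^\alpha_{f'}) \ket\psi = 0$. Here the key observation is that $F^\alpha_f$ and $I - F^\alpha_{f'}$ are \emph{projectors} (the first by assumption that $F_f$ is a projective measurement, the second because $F_{f'}$ is projective and its elements are orthogonal projectors summing to $I$), and they act on disjoint tensor factors and hence commute. So $F^\alpha_f \otimes (I - F^\alpha_{f'})$ is itself a projector, in particular positive semidefinite. A positive semidefinite operator whose expectation on $\ket\psi$ vanishes must annihilate $\ket\psi$: write $P = A^\dagger A$, then $\|A\ket\psi\|^2 = 0$, so $A\ket\psi = 0$ and $P\ket\psi = 0$. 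Applying this to $P = F^\alpha_f \otimes (I - F^\alpha_{f'})$ yields $F^\alpha_f \ket\psi - F^\alpha_f \otimes F^\alpha_{f'} \ket\psi = 0$, which is the claim.

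I do not anticipate any real obstacle: the only mildly subtle step is recognizing that the $\delta = 0$ inequality pinches to equality because $F^\alpha_{f'} \leq I$, and that the zero-expectation-implies-annihilation argument relies on positivity, which here comes for free from the disjoint-support projector structure. Nothing requires the full projective nature of $F_f$ beyond $0\leq F^\alpha_f \leq I$ for the positivity argument, so the same reasoning would in fact work for general POVMs; the statement specializes to projectors only because that is the regime connecting back to the setting of \cite{R17:classical}.
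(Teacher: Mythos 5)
Your proof is correct. The pinching step ($F^\alpha_f \otimes F^\alpha_{f'} \leq F^\alpha_f \otimes \one$ because the tensor product of positive operators is positive, so the $\delta=0$ inequality is an equality) and the step ``positive operator with vanishing expectation on $\ket\psi$ annihilates $\ket\psi$'' are both sound, and together they give exactly the stated identity. Compared with the paper: both arguments ultimately rest on the same saturation phenomenon, but the paper routes it through the normalized conditional state $\ket{\psi_{F^\alpha_f}} \propto F^\alpha_f\ket\psi$, deduces $\tr(F^\alpha_{f'}\ketbra{\psi_{F^\alpha_f}})=1$ and hence $F^\alpha_{f'}\ket{\psi_{F^\alpha_f}}=\ket{\psi_{F^\alpha_f}}$, and then runs a second, symmetric argument using the orthogonality relation $F^\alpha_{f'}F^\beta_{f'}=\one_{\alpha=\beta}F^\alpha_{f'}$ and $\sum_\alpha F^\alpha_f=\one$ to additionally conclude $F^\alpha_f\ket\psi = F^\alpha_f\otimes F^\alpha_{f'}\ket\psi = F^\alpha_{f'}\ket\psi$. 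Your version is shorter and avoids the normalization bookkeeping, and your closing observation is a genuine bonus: since only $0\leq F^\alpha_f\leq\one$ and $0\leq F^\alpha_{f'}\leq\one$ enter, the stated identity holds for general POVM elements, which fits the paper's broader programme. What your route does not deliver is the extra symmetric conclusion $F^\alpha_{f'}\ket\psi=F^\alpha_f\ket\psi$ that the paper's proof ends with (and which does use projectivity); that statement is not part of the lemma as written, but it is the form the paper actually uses to connect back to the perfect-record setting of Riedel's work, so be aware that your argument proves the lemma but not that stronger final display.
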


\begin{proof}
Define the following normalized states
\begin{equation*}
\ket {\psi_{F^\alpha_f}} = \frac{F^\alpha_f \ket\psi}{\tr(F^\alpha_f \ketbra\psi)} \mbox{ and } \ket {\psi_{F^\alpha_{f'}}} = \frac{F^\alpha_{f'} \ket\psi}{\tr(F^\alpha_{f'} \ketbra\psi)}.
\end{equation*}

By assumption, $\tr(F^\alpha_f \otimes F^\alpha_{f'} \ketbra{\psi}) = \tr(F^\alpha_f \ketbra{\psi})$.
Using the definition of $\ket {\psi_{F^\alpha_f}}$, this becomes
\begin{equation*}
\tr(F^\alpha_{f'} \ketbra{\psi_{F^\alpha_f}})\times\tr(F^\alpha_f \ketbra\psi) = \tr(F^\alpha_f \ketbra\psi).
\end{equation*}

Hence, one concludes that $\tr(F^\alpha_{f'} \ket{\psi_{F_f^\alpha}}) = 1$, which implies that 
\begin{equation}\label{eq:partly_redundant}
F^\alpha_{f'} \ket{\psi_{F^\alpha_f}} = \ket{\psi_{F^\alpha_f}}.
\end{equation}

Similarly, for all $\alpha, \beta$, we have 
\begin{equation*}
\tr(F^\alpha_f \ketbra{\psi_{F^\beta_{f'}}}) = \frac{\tr(F^\alpha_f \otimes F^\beta_{f'}\ketbra\psi)}{\tr(F^\beta_{f'}\ketbra\psi)}.
\end{equation*}

Using equation (\ref{eq:partly_redundant}) on the rhs above and recalling that $F^\alpha_{f'} \times F^\beta_{f'} = F^\alpha_{f'} \times \one_{\alpha = \beta}$, we~obtain 
\begin{equation*}
\tr(F^\alpha_f \ketbra{\psi_{F^\beta_{f'}}}) = \frac{\tr(F^\alpha_f\ketbra{\psi})}{\tr(F^\beta_{f'}\ketbra{\psi})} \times \one_{\alpha = \beta}.
\end{equation*}

{For fixed $\beta$, taking the sum over $\alpha$ yields 1, because $\ket{\psi_{F^\beta_{f'}}}$ is normalized and $\sum_\alpha F^\alpha_f = \one$}, so that we can conclude that $\tr(F^\alpha_f \ket{\psi_{F_f^\alpha}}) = 1$.
In turn, this implies that $F^\alpha_f \ket{\psi_{F^\alpha_{f'}}} = \ket{\psi_{F^\alpha_{f'}}}$ and we arrive at
\begin{equation*}
F^\alpha_f \ket\psi = F^\alpha_f \otimes F^\alpha_{f'} \ket\psi = F^\alpha_{f'} \ket\psi.
\end{equation*}
\end{proof}

\subsection{Extending the Compatibility Criterion as a Witness for Classicality}
\label{sec:orgcd1ffc0}
As previously argued, one expects that quantum Darwinism for a
many-body system \(\mathcal S\) implies that (i) a preferred set of
POVMs emerges from the sole requirement of being approximately
redundantly recorded in the state of \(\mathcal S\), and (ii) these
POVMs exhibit classicality.

The natural choice of witness for classicality is that observers
accessing fragments of \(\mathcal S\) will be able to explain all the
correlations of their measurement results without the recourse to
quantum correlations. In \cite{R17:classical}, this is required for
arbitrary pure quantum states of the system, which translates into the
ability of the preferred observables to induce a decomposition of the
state \(\ket\psi\) of \(\mathcal S\) into a superposition of
orthogonal branches \(\ket \psi = \sum_i \ket{\psi_i}\), where each
\(\ket {\psi_i}\) is a common eigenstate of all observables in
redundantly imprinted sets $\bm O_1, \bm O_2, \ldots$, i.e.,
\begin{equation*}
\forall O \in \bm O_1 \cup \bm O_2 \cup \ldots, \ O \ket{\psi_i} = \omega(i,O) \ket{\psi_i},
\end{equation*}
thereby defining the compatibility of all the observables of $\bm O_1
\cup \bm O_2 \cup \ldots$ on \(\ket \psi\).

As anticipated, compatibility does not generalize straightforwardly to
POVMs due to the absence of a meaningful equivalent to eigenstates of
observables. Nonetheless, several options have been proposed in other
contexts to understand and sometimes quantify the classicality of
POVMs, namely through the introduction of commutativity,
non-disturbance, joint-measurability and coexistence (see, for example, \cite{HW10:nondisturbing, L83:foundations}). Our choice,
justified below, for the substitute for compatibility is based on
joint measurability.

\begin{definition}[Joint-measurability]
Let $\bm O$ be a set of POVMs, and for $O \in \bm O$ denote its
elements by $\{O^\omega\}_\omega$.  The set $\bm O$ is jointly
measurable if and only if there exists a POVM $T$ with elements %iff is short hand for if and only if.
$\{T^\theta\}_\theta$ such~that
\begin{equation}
\forall O \in \bm O, \ \forall \omega, \ O^\omega = \sum_\theta p(\omega | O, \theta) T^\theta, \label{eq:joint_meas}
\end{equation}
where $p(\omega | O, \theta)$ is a probability distribution for $\omega$ when $O$ and $\theta$ are fixed.
\end{definition}

This definition states that all measurements in \(\bm O\) can be
simulated by first measuring \(T\) and then, depending on the obtained
outcome \(\theta\) and the chosen $O \in \bm O$, by sampling
\(\omega\) according to the probability distribution \(p(\omega|O,
\theta)\).

This choice is motivated by the operational approach promoted by
quantum Darwinism. Observers can perform measurements, accumulate %HO: factored "can" for all 3 verbs
statistics and investigate correlations between them. When POVMs are
jointly measurable, observers are able to interpret the correlations
of measurement results through a simple marginalization process.

Joint measurability is further justified as a witness of classicality,
as it rules out steering---a purely quantum phenomenon---(see \cite{UCNG20:quantum} for a review). On the contrary, coexistence
can reveal steering \cite{UMG14:joint}, and is therefore not an
appropriate choice in our context. Additionally, non-disturbance
suffers from drawbacks in light of quantum Darwinism: it is
usually asymmetric, meaning that measurements need to be carried out
in a precise order so as to not disturb one another. This ordering
requirement contradicts our everyday experience of classical
features obviously robust to the precise order in which measurements
are performed. Finally, commutativity is shown to imply
joint measurability \cite{HW10:nondisturbing}, but the converse is in
general not true. Hence, without further good reasons to rule out
joint measurability, witnessing classicality through commutativity
risks being too restrictive and, thus, potentially missing the emergence
of objectivity.

Additionally, Proposition 1 of \cite{HW10:nondisturbing} shows that
when restricted to projective measurements, joint measurability is
indeed equivalent to the commutativity of observables. Thus, our choice of
witness for classicality reduces to that of
Ref. \cite{R17:classical}, as compatibility on the state of the
system reduces to commutativity on its support.

Lastly, to obtain a useful criterion for classicality in our context,
it needs to account for (i) approximations and (ii) systems whose
evolutions practically restrict their attainable states to a subset of
all possible density matrices.  To this end, we note that the operator
equality of Equation (\ref{eq:joint_meas}) is equivalent to a
statement on probabilities of the outcomes computed for system states %HO: Need to keep plural for outcomes as there are probabilities for each of the possible outcomes.
\(\rho\) that span the set of density matrices for \(\mathcal S\). This is
because the trace function is an inner product for the real Hilbert
space of Hermitian matrices. Hence, we can deal with (i) by stating
that probability distributions are close to that obtained for
jointly measurable POVMs, and (ii) can be accounted for by enforcing
the relation only on the set \(\mathcal D\) of attainable states.

\begin{definition}[\(\delta\)-approximate joint measurability over $\mathcal D$]
Let $\mathcal D$ be a set of density matrices, $\delta \geq 0$ and
$\bm O$ a set of POVMs, where the elements of $O \in \bm O$ are
$\{O^\omega\}_\omega$.  The set $\bm O$ is \(\delta\) approximately
jointly measurable over $\mathcal D$ if there exists a POVM $T$ with
elements $\{T^\theta\}_\theta$ such that
\begin{equation}
\forall O \in \bm O, \ \forall \omega, \ \forall \rho \in \mathcal D, \ \left|\tr(O^\omega\rho) - \sum_\theta p(\omega | O, \theta) \tr(T^\theta\rho)\right| \leq \delta, \label{eq:approx_joint_meas}
\end{equation}
where $p(\omega | O, \theta)$ is a probability distribution for
$\omega$ when $O$ and $\theta$ are fixed.
\end{definition}

\section{Recovering Joint Measurability \label{sec:tuple}} %HO: maintained Joint Measureability without the dash
\label{sec:org200e8c1}
As seen in Section \ref{sec:ideal}, redundancy is not enough to imply
classicality. The absence of a natural, or preferred, way to group
microscopic sites of a quantum many-body system allows information
about incompatible observables to be redundantly recorded in the whole
system. Although incompatible observables cannot be read off at the
same time by multiple observers---so that this statement does not
violate axioms of quantum mechanics---they can still collectively
decide beforehand which one to recover.

In the case of perfect redundant records of projective measurements,
the non pair-covering condition ensures that only a single set of
compatible observables can be accessed by observers, thus
corresponding to the everyday experience. Given our definitions of
approximate records and the replacement of compatibility with
approximate joint measurability, the question we have to address is
whether non pair-covering is enough to guarantee the joint measurability
of a single set of observables.

\begin{theorem}\label{thm:pair}
Let $\mathcal S$ be a quantum many-body system, such that there exists
$\mathcal F$, a partition of $[1,N]$ of the microscopic sites
$\mathcal S_i$ of $\mathcal S$. Let $\bm F = \{F_f\}_f$ be a set POVMs,
where $F_f$ acts on $f$ only and satisfies $\forall \alpha$ and
$\forall f,f' \in \mathcal F$,
\begin{equation*}
\forall \rho \in \mathcal D, \ \tr(F^\alpha_f \otimes F^\alpha_{f'} \rho) \geq (1-\delta) \tr(F^\alpha_f\rho),
\end{equation*}
for some $\delta > 0$, and $\mathcal D$ a set of density matrices.
Assume there exists $\mathcal G$, a second partition, and $\bm =
\{G_g\}_g$ with $g \in \mathcal G$ a second set of POVMs satisfying
the corresponding approximate redundantly recorded condition stated
above. Assume that $\mathcal F$ and $\mathcal G$ do not pair-cover
each other, then for all $f\in \mathcal F$ and $g \in \mathcal G$,
$F_f$ and $G_g$ are \(\delta\)-approximately jointly measurable on %HO: keep dash for $\delat$-approximation
$\mathcal D$.
\end{theorem}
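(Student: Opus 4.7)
The plan is to exhibit a single POVM $T$ on a disjoint region of $\mathcal S$, together with deterministic classical post-processings, that jointly simulates the statistics of both $F_f$ and $G_g$ on every $\rho \in \mathcal D$. Non pair-covering supplies the geometric room to place $T$, while the approximate redundancy of $\bm F$ and $\bm G$ supplies the statistical agreement. Conceptually, joint measurability is strictly weaker than the operator commutativity on the support of $\ket\psi$ targeted in the ideal case of Section \ref{sec:ideal}, so the five-step chain (\ref{eq:commut_1})--(\ref{eq:commut_5}) collapses here to a direct construction.

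First, I use both directions of non pair-covering to carve out disjoint indices. Applying the $\mathcal G$-side condition to the pair $(f,f)$ yields $g' \in \mathcal G$ with $f \cap g' = \emptyset$; applying the $\mathcal F$-side condition to $(g, g')$ then yields $f' \in \mathcal F$ with $f' \cap g = f' \cap g' = \emptyset$. The essential consequence is $f' \cap g' = \emptyset$, so the tensor product $T^{(\alpha', \mu')} := F_{f'}^{\alpha'} \otimes G_{g'}^{\mu'}$, extended by $\one$ on the complement of $f' \cup g'$, is a bona fide POVM.

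Next, I fix deterministic post-processings $p(\alpha \mid F_f, (\alpha', \mu')) := \one_{\alpha = \alpha'}$ and $p(\mu \mid G_g, (\alpha', \mu')) := \one_{\mu = \mu'}$. Collapsing the $\mu'$-sum via $\sum_{\mu'} G_{g'}^{\mu'} = \one$, and symmetrically the $\alpha'$-sum, reduces the joint-measurability inequality of Equation (\ref{eq:approx_joint_meas}) to showing, for every $\rho \in \mathcal D$, that $\bigl|\tr(F_f^\alpha \rho) - \tr(F_{f'}^\alpha \rho)\bigr| \leq \delta$, together with the symmetric bound for $G_g$ versus $G_{g'}$.

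Finally, I derive this statistical closeness from Definition \ref{def:perfect_imprint} alone. Decomposing $\tr(F_{f'}^\alpha \rho) = \sum_\beta \tr(F_f^\beta \otimes F_{f'}^\alpha \rho)$, the diagonal term $\beta = \alpha$ lies in $[(1-\delta)\tr(F_f^\alpha \rho),\, \tr(F_f^\alpha \rho)]$ by assumption, while for $\beta \neq \alpha$ the operator inequality $F_{f'}^\alpha \leq \one - F_{f'}^\beta$ (the two are distinct elements of the same POVM) combined with the approximate recording of $F_f^\beta$ by $F_{f'}^\beta$ yields $\tr(F_f^\beta \otimes F_{f'}^\alpha \rho) \leq \delta \tr(F_f^\beta \rho)$. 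Summing and using $\sum_\beta \tr(F_f^\beta \rho) = 1$ together with $\tr(F_f^\alpha \rho) \leq 1$ pins $\tr(F_{f'}^\alpha \rho)$ within $\delta$ of $\tr(F_f^\alpha \rho)$, and the $F \leftrightarrow G$ symmetric argument handles $G_g$. The only subtle step, and the one requiring care, is this cross-term bound, whose validity hinges on $F_{f'}$ being a genuine POVM rather than a collection of isolated positive operators.
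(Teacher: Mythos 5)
Your proof is correct and follows essentially the same route as the paper's: non pair-covering supplies disjoint $f'$ and $g'$, the product POVM $\{F^{\alpha'}_{f'}\otimes G^{\mu'}_{g'}\}$ with marginalization as post-processing serves as the joint observable $T$, and the redundancy hypothesis pins $\tr(F^\alpha_{f'}\rho)$ within $\delta$ of $\tr(F^\alpha_f\rho)$ (and likewise for $G$). The only difference is cosmetic: you bound the cross terms of $\sum_\beta \tr(F^\beta_f\otimes F^\alpha_{f'}\rho)$ directly via $F^\alpha_{f'}\leq \one - F^\beta_{f'}$, whereas the paper sums the redundancy inequality over the complementary outcomes $\beta\neq\alpha$; both computations deliver the same two-sided $\delta$ bound.
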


\begin{proof}
The non pair-covering condition imposes that
\begin{align*}
& \forall f,f'\in \mathcal F, \exists g \in \mathcal G,\mbox{ s.t. } f\cap g = \emptyset \mbox{ and } f'\cap g = \emptyset \\
& \forall g,g'\in \mathcal G, \exists f \in \mathcal F,\mbox{ s.t. } g\cap f = \emptyset \mbox{ and } g'\cap f = \emptyset.
\end{align*}

For given $f \in \mathcal F$ and $g \in \mathcal G$, using the non pair-covering condition, it is possible to choose $f' \in \mathcal F$ and $g' \in \mathcal G$ such that 
\begin{align*}
f \cap g' = \emptyset = f' \cap g' \\
g \cap f' = \emptyset = g' \cap f'.
\end{align*}

Then, using redundancy and the disjointness conditions above, for all $\alpha$, we obtain 
\begin{align*}
\tr(F^\alpha_f \rho)
& \geq \tr(F^\alpha_f\otimes F^\alpha_{f'} \rho) \\
& = \tr(F^\alpha_f\otimes F^\alpha_{f'} \otimes \sum_\nu G^\nu_{g'}\rho) \\
& \geq (1-\delta) \tr(F^\alpha_{f'} \otimes \sum_\nu G^\nu_{g'}\rho),
\end{align*}
and similarly for all $\nu$
\begin{align*}
\tr(G^\nu_{g} \rho)
& \geq \tr(G^\mu_{g} \otimes G^\mu_{g'} \rho) \\
& = \tr(G^\mu_{g} \otimes G^\mu_{g'} \otimes \sum_\beta F^\beta_{f'}\rho) \\
& \geq (1-\delta) \tr(G^\mu_{g'} \otimes \sum_\beta F^\beta_{f'}\rho).
\end{align*}

We also have
\begin{align*}
\tr (F^\alpha_f \rho)
& = 1 - \sum_{\beta\setminus\alpha}\tr(f^\beta_f\rho) \\
& \leq 1 - (1-\delta) \sum_{\beta\setminus\alpha}\tr(F^\beta_{f'} \otimes \sum_\nu G^\nu_{g'}\rho) \\
& = 1 - (1-\delta) (1-\tr(F^\alpha_{f'} \otimes \sum_\nu G^\nu_{g'}\rho)) \\
& = (1-\delta) \tr(F^\alpha_{f'} \otimes \sum_\nu G^\nu_{g'}\rho) + \delta,
\end{align*}
and similarly for $\tr(G^\nu_{g}\rho)$.

Combining both inequalities, we arrive at
\begin{align*}
& \forall \alpha, \nu, \ \left|\tr(F^\alpha_f \rho) - \tr(\sum_\nu F^\alpha_{f'} \otimes G^\nu_{g'} \rho) \right| \leq \delta, \mbox{ and} \\
& \forall \beta, \mu, \ \left|\tr(G^\mu_f \rho) - \tr(\sum_\beta F^\beta_{f'} \otimes G^\mu_{g'} \rho) \right| \leq \delta.
\end{align*}

This concludes the proof, as the probabilities of obtaining outcomes $F^\alpha_f$ and $G^\mu_{g}$ are \(\delta\)-close to that obtained by measuring $F^\alpha_{f'} \otimes G^\mu_{g'}$ followed by the appropriate post processing, consisting of summing over the outcomes of the ignored POVM. %HO: same as $delta$-approximation
\end{proof}

Hence, any pair of approximately redundantly recorded POVMs is 
approximately jointly measurable. The trouble to recover a perfect
analogue to the ideal case with pure states and projective
measurements is that pairwise joint measurability does not imply
global joint measurability \cite{HFR14:quantum}. That is, for three
POVMs, all pairs can be jointly measurable, but all three of them
might not be the marginals of a single POVM. As a consequence, one
cannot claim full classicality in such a situation.

Global joint measurability can nonetheless be obtained by
strengthening the non pair-covering condition into non tuple-covering.
\begin{definition}[non tuple-covering]
$\mathcal F, \mathcal G, \ldots, \mathcal Z$ partitions of $[1,N]$ are non tuple-covering each other iff,
$\forall f \in \mathcal F, \ g \in \mathcal G, \ldots, \ z \in \mathcal Z, \ \exists f' \in \mathcal F, \ g' \in \mathcal G, \ldots, \ z' \in \mathcal Z$ s.t.
\begin{align*}
f' \cap g & = f' \cap g' = \ldots = f' \cap z = f' \cap z' = \emptyset \\
g' \cap f & = g' \cap f' = \ldots = g' \cap z = g' \cap z' = \emptyset \\
& \vdots \\
z' \cap f & = z' \cap f' = z' \cap g = z' \cap g' = \ldots = \emptyset. 
\end{align*}
\end{definition}

{Using this definition, the following theorem allows to recover global joint measurability.}
\begin{theorem}
Let $\bm F = \{F_f\}_{f\in \mathcal F}, \ \bm G =\{G_g\}_{g\in \mathcal G}, \ldots \ \bm Z = \{Z_z\}_{z \in \mathcal Z}$ be sets of $\delta$-approximate redundantly recorded POVMs on the state of a quantum many-body system $\mathcal S$, with $\mathcal F, \mathcal G, \ldots \mathcal Z$ partitions of $[1,N]$, the indices of the microscopic sites. %$\delat$-approx.
If the partitions $\mathcal F, \mathcal G, \ldots \mathcal Z$ do not tuple-cover each other, then for any $f,g,\ldots z$, $F_f, G_g, \ldots Z_z$ are \(\delta\)-approximately joint measurable.  %HO: $\delat$-approx.
\end{theorem}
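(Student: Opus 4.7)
The plan is to lift the pairwise argument of Theorem~\ref{thm:pair} to several partitions at once by exhibiting a single ``witness'' POVM $T$ whose marginals reproduce, up to slack $\delta$, the statistics of each of $F_f, G_g, \ldots, Z_z$ on every $\rho \in \mathcal D$. The construction of $T$ is simply the tensor product of one primed POVM from each set, with the primed representatives supplied by the non tuple-covering hypothesis.

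First I would fix $f \in \mathcal F, g \in \mathcal G, \ldots, z \in \mathcal Z$ and invoke non tuple-covering to extract primed representatives $f', g', \ldots, z'$. The disjointness relations in the definition imply in particular that $f', g', \ldots, z'$ are pairwise disjoint as subsets of $[1,N]$, so
\[
T^{(\alpha, \mu, \ldots, \zeta)} \;=\; F^\alpha_{f'} \otimes G^\mu_{g'} \otimes \cdots \otimes Z^\zeta_{z'}
\]
is a legitimate POVM on the union of the primed sets, extended by identity on the complement. To recover $F_f$ from $T$, I would then take the deterministic post-processing $p(\alpha \mid F_f, (\alpha', \mu, \ldots, \zeta)) = \one_{\alpha = \alpha'}$; summing the $T$-marginal over the non-$F$ indices collapses the other factors via $\sum_\mu G^\mu_{g'} = \cdots = \sum_\zeta Z^\zeta_{z'} = \one$, leaving $\tr(F^\alpha_{f'} \rho)$. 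It then suffices to show $|\tr(F^\alpha_f \rho) - \tr(F^\alpha_{f'} \rho)| \leq \delta$ uniformly on $\mathcal D$, which is exactly the estimate already obtained inside the proof of Theorem~\ref{thm:pair}: approximate redundancy (with $f$ and $f'$ exchanged) gives the lower bound $\tr(F^\alpha_f \rho) \geq (1-\delta)\tr(F^\alpha_{f'} \rho)$, and the upper bound $\tr(F^\alpha_f \rho) \leq (1-\delta)\tr(F^\alpha_{f'} \rho) + \delta$ follows by writing $\tr(F^\alpha_f \rho) = 1 - \sum_{\beta \neq \alpha} \tr(F^\beta_f \rho)$ and bounding each summand from below by redundancy.

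The same argument applied to $(g, g'), \ldots, (z, z')$ handles the remaining marginals, and assembling the per-POVM post-processings into a single family $p(\omega \mid O, \theta)$ matches the definition of $\delta$-approximate joint measurability over $\mathcal D$. The main obstacle is almost entirely bookkeeping: one must check that every disjointness invoked along the way---pairwise disjointness of $f', g', \ldots, z'$ needed to make $T$ a well-defined tensor product, and the automatic within-partition disjointness $f \cap f' = \emptyset$, etc., required for the redundancy inequality to apply---is indeed supplied, respectively by the non tuple-covering definition and by the definition of a partition. Once that verification is laid out in full generality, the conclusion follows immediately from the definition of $\delta$-approximate joint measurability.
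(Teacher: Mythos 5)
Your proposal is correct and follows essentially the same route as the paper: replace each of $F_f, G_g, \ldots, Z_z$ by the primed representatives supplied by non tuple-covering, observe that these act on pairwise disjoint sites and hence define the global POVM $F^\alpha_{f'}\otimes G^\mu_{g'}\otimes\cdots\otimes Z^\zeta_{z'}$, and control the replacement error by $\delta$ via the redundancy inequalities from Theorem~\ref{thm:pair}. The paper's own proof is a terse sketch of exactly this argument; your version merely makes the witness POVM, the post-processing, and the two-sided bound $|\tr(F^\alpha_f\rho)-\tr(F^\alpha_{f'}\rho)|\leq\delta$ explicit.
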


\begin{proof}
Given the  non tuple-covering condition, one could appropriately replace any measurement of $F_f, G_g, \ldots Z_z$ by a measurement of $F_{f'}, G_{g'}, \ldots Z_{z'}$. %HO: same as non pair-covering. Just pair is now tuple.
From there, the same proof technique as the one used for Theorem \ref{thm:pair} applies.
Using the said replacement of measurements, one arrives at a situation where all POVMs $F_{f'}, G_{g'}, \ldots, Z_{z'}$ act on different subsets of the microscopic sites.
They are, thus, defining a global POVM with elements $F^\alpha_{f'}\otimes G^\beta_{g'}\otimes \ldots Z^\zeta_{z'}$ from which the probabilities of the outcomes $(\alpha, \beta, \ldots, \zeta)$ can be \(\delta\)-approximated through classical post-processing.
This allows to conclude about the \(\delta\)-approximate joint-measurability criterion for POVMs $\{F_f\}_{f\in \mathcal F}, \{G_g\}_{g\in \mathcal G}, \ldots, \{Z_z\}_{z \in \mathcal Z}$.
\end{proof}

\section{Dynamical Approach to the Emergence of Classicality \label{sec:dynamical}}
\label{sec:org9419e64}
The non pair-covering condition has an appealing property of being
rather simple and allowing the recovery of objectivity for usual
many-body physics experiments: pair-covering is too delicate to
maintain for macroscopic systems containing possibly millions or
billions of microscopic sites so that they would necessarily be
exhibiting only usual classical properties.

On the contrary, the non tuple-covering seems a more complex, if not
harder, condition to achieve. This, in turn, weakens considerably the
above argument and, as a consequence, the reach of quantum Darwinism
for quantum many-body systems. Yet, we prove below that this is not
the case, and that quantum Darwinism is a ubiquitous mechanism to
explain the emergence of a single set of approximately
jointly-measurable POVMs.

The way to address this question is to take a dynamical view at the
creation of the redundant imprints into the state of the quantum
many-body system. More precisely, we need to acknowledge the fact that
the redundant imprints---be they perfect or \mbox{approximate---are} the
result of an evolution from some initial state of an initial
uncorrelated system \(\mathcal R\). In other terms, it results from the
transformation of a state \(\sigma \in \mathcal D(\mathcal R)\) to a
state \(\rho \in \mathcal D(\mathcal S)\), where \(\mathcal D(\mathcal
R)\) is the set of density matrices for \(\mathcal R\) and
similarly for \(\mathcal S\). The transformation can then be
represented by a CPTP map \(\Lambda\) so that \(\rho =
\Lambda(\sigma)\).

The structure of the correlations, and hence of the information,
between \(\mathcal R\) and \(\mathcal S\) can be analyzed using the
techniques pioneered in \cite{BPH15:generic} and refined
in \cite{QR21:emergent}. Yet, these need to be recast to fit into the
quantum many-body setting, as they have been developed in the
system--environment context.

\begin{theorem}\label{thm:ranard}
  Let $\Lambda$ be a CPTP map from $\mathcal D(\mathcal R)$ to
  $\mathcal D(\mathcal S)$, and $w_q, w_f \in [1,N]$, with $\mathcal
  S=\cup_{i=1}^N \mathcal S_i$ and $w_q + w_f \leq N$. For all $\sigma
  \in \mathcal D(\mathcal R)$, consider $\varrho = \Lambda(\sigma)$
  the state of a generic quantum many-body system that evolved from
  the initial preparation state $\sigma$ through $\Lambda$.  Then,
  there exists a subset $q$ of $[1,N]$ of size at most $w_q$ such that
  for all subsets $f$ of $[1,N]\setminus q$ with size $w_f$, and for
  all POVMs $F_f = \{F^\alpha_f\}_\alpha$ on $f$
  \begin{equation*}
    \forall \alpha, \ \left| \tr(F^\alpha_f \varrho) - \sum_\theta p(\alpha | F_f, \theta) \tr( T^\theta_q \varrho) \right|
    \leq \delta,
  \end{equation*}
  with $\delta = d_{\mathcal R} \sqrt{2\ln(d_{\mathcal R})
    \frac{w_f}{w_q}}$, where $T_q$ is a fixed POVM on $q$ that does
  not depend on $f$ nor $\sigma$, and where $p(\alpha | F_f, \theta)$
  is a classical probability distribution for $\alpha$ when $F_f$ and
  $\theta$ are fixed that is independent of $\sigma$. Above,
  $d_{\mathcal R}$ denotes the dimension of $\mathcal R$.
\end{theorem}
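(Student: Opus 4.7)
The plan is to adapt the information-theoretic pigeonhole technique of \cite{BPH15:generic,QR21:emergent}, developed in the system-plus-environment context, to the many-body setting where the output lives in a partitioned system $\mathcal S = \cup_i \mathcal S_i$. The central reformulation is to replace the distinguished ``system'' of those works by a purification of the input space $\mathcal R$, and to play the pigeonhole game over the microscopic sites of the many-body output instead of over copies of a fixed environment.

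First I would purify the input: let $\mathcal R'$ be a copy of $\mathcal R$ and $\ket{\Phi}_{\mathcal R\mathcal R'}$ be maximally entangled. Define the fixed bipartite Choi-like state $\tau_{\mathcal R'\mathcal S} = (\mathrm{id}_{\mathcal R'}\otimes \Lambda)(\ketbra{\Phi})$. By the usual linearity/purification trick, the outcome probability $\tr(F_f^\alpha\,\Lambda(\sigma))$ is a linear functional of $\sigma$ that factors through $\tau_{\mathcal R' f}$, so the whole problem reduces to a statement about this single fixed state $\tau$, independent of $\sigma$.

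Next comes the pigeonhole on conditional mutual information. For any candidate set $q\subseteq [1,N]$ of size $w_q$, the complement $[1,N]\setminus q$ can be cut into roughly $(N-w_q)/w_f$ disjoint blocks of size $w_f$; subadditivity together with the chain rule gives
\begin{equation*}
\sum_{\text{blocks } b} I(\mathcal R':b\,|\,q)_\tau \;\le\; I(\mathcal R':\mathcal S\setminus q\,|\,q)_\tau \;\le\; 2\log d_{\mathcal R}.
\end{equation*}
Averaging over an appropriate random choice of $q$ of size $w_q$, or using a greedy extraction argument as in \cite{BPH15:generic,QR21:emergent}, then yields a single $q$ for which $I(\mathcal R':f\,|\,q)_\tau \le 2\log(d_{\mathcal R})\,w_f/w_q$ holds uniformly for every size-$w_f$ subset $f$ disjoint from $q$.

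Finally I would convert small conditional mutual information into operational reconstruction. By approximate-Markov recoverability, one obtains a CPTP map $\mathcal E_{q\to qf}$ whose action on $\tau_{\mathcal R' q}$ reproduces $\tau_{\mathcal R' qf}$ up to a trace distance controlled by Pinsker's inequality. Fixing $T_q$ to be a single informationally complete POVM on $q$, independent of $f$ and $\sigma$, the post-processing $p(\alpha\,|\,F_f,\theta)$ is defined as the Born probability of outcome $\alpha$ under $F_f$ on the state reconstructed by $\mathcal E_{q\to qf}$ from the post-measurement branch associated with outcome $\theta$ of $T_q$. The main obstacle I anticipate is this final step: transferring a trace-distance bound on the fixed state $\tau$ into a bound on $\tr(F_f^\alpha\Lambda(\sigma))$ that is uniform in the preparation $\sigma$. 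Retrieving the $\sigma$-dependence from $\tau$ requires pairing the $\mathcal R'$-side with an operator whose Hilbert-Schmidt norm can scale as $\sqrt{d_{\mathcal R}}$, and this is exactly where the leading $d_{\mathcal R}$ prefactor in $\delta = d_{\mathcal R}\sqrt{2\ln(d_{\mathcal R})\,w_f/w_q}$ enters, so tracking that factor together with the Pinsker constant is the place where the bookkeeping requires the most care.
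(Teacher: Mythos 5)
Your overall architecture---Choi-state reduction to a single fixed state $\tau_{\mathcal R'\mathcal S}$, a pigeonhole/greedy construction of a Markov blanket $q$, Pinsker to convert an entropic bound into trace distance, and the $d_{\mathcal R}$ prefactor from undoing the Choi normalization to get a bound uniform in $\sigma$---matches the paper, which implements exactly this via Proposition~1 of Qi--Ranard and a Choi-to-diamond-norm conversion. However, there is a genuine gap in where you place the classicalization of $q$. The paper's key input bounds the conditional mutual information of the \emph{measured} state, $I(\mathcal R':f\,|\,q)_{\Xi_f\otimes\Xi_q(\rho)}$, where $\Xi_q$ is a quantum--classical channel implementing the POVM $T_q$; the blanket $q$ and the measurement $T_q$ are constructed together. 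Because $q$ is already classical, the conditional state decomposes as $\sum_\theta p_\theta\,\rho^\theta_{\mathcal R'f}\otimes\ketbra{\theta}$ with $p_\theta=\tr(T^\theta_q\rho)$, the CMI becomes $\sum_\theta p_\theta I(\mathcal R':f)_{\Xi_f(\rho^\theta)}$, Pinsker applies term by term, and $p(\alpha|F_f,\theta)=\tr(F^\alpha_f\rho^\theta_f)$ is automatically a genuine probability distribution conditioned on an actual measurement outcome $\theta$.

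Your route instead bounds the \emph{unmeasured} CMI $I(\mathcal R':f|q)_\tau$ and tries to recover classicality afterwards, and both halves of that repair fail. First, CMI is not monotone under processing of the conditioning system, so a small $I(\mathcal R':f|q)_\tau$ does not imply a small $I(\mathcal R':f|q)$ after an informationally complete POVM is applied to $q$; the correlations between $\mathcal R'$ and $f$ may be mediated by quantum information in $q$ that the measurement destroys. Second, the Fawzi--Renner recovery map $\mathcal E_{q\to qf}$ guaranteed by approximate Markov recoverability is a quantum channel acting on the coherent state of $q$; composing it with a prior destructive measurement of $T_q$ (which is what your definition of $p(\alpha|F_f,\theta)$ as a Born probability on the recovered branch amounts to) replaces $\mathcal E_{q\to qf}$ by an entanglement-breaking channel, and no approximation guarantee survives this replacement in general. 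The theorem's content is precisely that the $F_f$ statistics are reproduced by classical post-processing of the outcome of a fixed measurement on $q$, so this step cannot be waved through: you must run the pigeonhole over the measured conditional mutual information (optimized over QC channels on $q$), as the cited Proposition~1 does. A secondary point: your displayed chain-rule inequality for a fixed $q$ only controls the average over one partition of $[1,N]\setminus q$ into blocks (with denominator $N-w_q$, not $w_q$); the uniform-in-$f$ bound with the $w_f/w_q$ rate genuinely requires the greedy construction you mention in passing, so that step should be made explicit rather than offered as one of two options.
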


\begin{proof}
  The proof will proceed in two steps. First, it will follow the steps
  of Theorem~2 of~\cite{QR21:emergent} to obtain a bound on the
  distance between the Choi-states of two specific channels, one being
  the channel $\Lambda$ reduced to some sufficiently small subsets $f$
  and the other one being a measure and prepare channel from $\mathcal
  R$ to $f$. The second step will focus on the measurement done by the
  measure and prepare channel and show that it can be understood as a
  measurement on a subset $q$ disjoint and independent of $f$.
  
  Consider a basis $\ket i$ of $\mathcal R$ and a fiducial reference
  system $\mathcal R'$ isomorphic to $\mathcal R$.  Define the
  maximally mixed state $\ket \psi$ of $\mathcal R\mathcal R'$ as
  $1/\sqrt{d_\mathcal R} \sum_i \ket {ii}_{\mathcal R\mathcal R'}$.
  The Choi-state of $\Lambda$ is then $\rho = (\one_{\mathcal
    R'}\otimes\Lambda(\ketbra\psi)$ (see, for example,~\cite{W17:quantum}).  %MDPI: Please kindly check which reference it refers to and revise the citation label to correct format. The same below.%HO: Added the ref in bibliography. It was missing
 We
  can now apply Proposition~1 of~\cite{QR21:emergent} to $\rho$.  For
  $w_f, w_q \in [1,N]$, there exists $q \subseteq [1,N]$ of size $w_q$
  and $\Xi_q$ a quantum--classical channel on $q$ such that
  \begin{equation*}\label{eq:prop1}
    \forall f \subseteq [1,N] \setminus q, \ |f| = w_f, \ \max_{\Xi_f \in QC} I(\mathcal R' : f |q)_{\Xi_f \otimes \Xi_q(\rho)} \leq S(\mathcal R')_{\rho} \frac{w_f}{w_q}.
  \end{equation*}

  Above, $\Xi_f$ is a quantum--classical channel on $f$ such that
  $\Xi_f(X) = \sum_\alpha \tr(F^\alpha_f \ketbra{\alpha})$ for some
  POVM $F_f$ on $f$; $S(\mathcal R')_{\rho}$ is the von Neumann
  entropy for the system $\mathcal R'$ when the global state is
  $\rho$; and $I(\mathcal R':f|q)_{\Xi_f \otimes \Xi_q(\rho)}$ is the
  quantum mutual information between $\mathcal R'$ and $f$ conditioned
  on $q$ for the global state $\Xi_f \otimes \Xi_q (\rho)$---note
  that to ease notation, the obvious identity operators will continue
  to be omitted.  The interest of this proposition is that it
  constructs a subset $q$ of microscopic subsystems of $\mathcal S$ of
  size at most $w_q$ such that, \emph{irrespective} of the choice of
  another subset $f$ of microscopic subsystems of size $w_f$ disjoint
  from $q$, the correlations between $\mathcal R'$ and any observation
  on $f$ through $\Xi_f$ conditioned on an observation of $q$ through
  $\Xi_q$ can be made small.  This means that observing $q$ through
  $\Xi_q$ extracts all there is to know about $\mathcal R'$ so that it
  becomes uncorrelated with any further observation on $f$.  By
  analogy with the classical case, {Ref.} 
 \cite{QR21:emergent} refers to the
  region $q$ as a quantum Markov blanket.  We can now use this bound
  to arrive at a statement of closeness between two Choi-states.  More
  precisely, for $\Xi_q$ implementing the POVM $T_q =
  \{T^\theta_q\}_\theta$ on $q$ so that $\Xi_q(X) = \sum_\theta
  \tr(T^\theta_q X) \ketbra{\theta}$, we have
  \begin{align}
    \tr_{\bar f} (\Xi_f \otimes \Xi_q (\rho)) & =  \Xi_f \left(\sum_{\theta} p_\theta \rho^\theta_{\mathcal R' f} \ketbra{\theta}\right), \mbox{ with } \\
    p_\theta & =  \tr (T^\theta_q \rho) \\
    \rho^\theta_{\mathcal R'f} & =  \frac{1}{p_\theta}\tr_{\bar f q}(T^\theta_q \rho) \label{eq:meas_q},
  \end{align}
  where $\bar f$ is the complement of $f$ in $[1,N]\setminus q$ so
  that the system $\mathcal S$ decomposes into $f \bar f q$.  As a %HO: added missing system
  consequence, $I(\mathcal R' : f | q)_{\Xi_f \otimes \Xi_q(\rho)} =
  \sum_\theta I(\mathcal R' : f)_{\Xi_f(\rho^\theta_{\mathcal R'f})}$.
  Using the quantum Pinsker inequality~\cite{HOT81:kms} for $I(\mathcal R' :
  f)_{\Xi_f(\rho^\theta_{\mathcal R'f})}$, one obtains that   
  \begin{equation*}
    \frac{1}{2 \ln 2} \left\| \Xi_f (\rho^\theta_{\mathcal R'f} - \rho^\theta_{\mathcal R'}\otimes \rho^\theta_f)\right\|_1^2 \leq I(\mathcal R' : f)_{\Xi_f( \rho^\theta_{\mathcal R'f})}.
  \end{equation*}
  
  This being true for all $\theta$, using the convexity of both the
  square function and the 1-norm, we obtain
  \begin{equation*}
    \frac{1}{2 \ln 2} \left\| \Xi_f (\rho_{\mathcal R'f} - \sum_\theta p_\theta \rho^\theta_{\mathcal R'}\otimes \rho^\theta_f)\right\|^2_1 \leq I(\mathcal R' : f | q)_{\Xi_f \otimes \Xi_q (\rho)}.
  \end{equation*}
  
  Now, using Equation~(\ref{eq:prop1}) and $S(\mathcal R') \leq \log(d_\mathcal R)$, we have that for all quantum--classical channels $\Xi_f$ on $f$:
  \begin{equation}\label{eq:bound_choi}
  \left\| \Xi_f (\rho_{\mathcal R'f} - \sum_\theta p_\theta \rho^\theta_{\mathcal R'}\otimes \rho^\theta_f)\right\|_1 \leq \sqrt{2\ln(d_\mathcal R) \frac{w_f}{w_q}}.
  \end{equation}
  
  Above $\rho_{\mathcal R'f}$ is the Choi-state corresponding to
  $\Lambda_f$ obtained by reducing the channel $\Lambda$ to $f$, while
  $\sum_\theta p_\theta \rho^\theta_{\mathcal R'} \otimes
  \rho^\theta_f$ defines $\Gamma_f$, corresponding to a measure and
  prepare channel from $\mathcal R$ to $f$, as its Choi-state is
  separable with respect to the $\mathcal R' f$ partition.  Note that
  in $\Gamma_f$, the prepared states $\rho^\theta_f$ are independent
  of the input of the channel.

  We can define two additional channels, $\Lambda^{\Xi_f}_f =
  \Xi_f\circ\Lambda_f$ and $\Gamma^{\Xi_f}_f = \Xi_f\circ \Gamma_f$
  that correspond to the Choi-states $\Xi_f(\rho_{\mathcal R'f})$ and
  $\Xi_f(\sum_\theta p_\theta \rho^\theta_{\mathcal R'}\otimes
  \rho^\theta_f)$, respectively, and that are realized by measuring the
  output states of $\Lambda_f$ and $\Gamma_f$ with the POVM
  $F_f= \{F^\alpha_f\}_\alpha$. Then, we have, for all $\Xi_f \in QC$
  \begin{equation*}
    \left\|\Lambda^{\Xi_f}_f - \Gamma^{\Xi_f}_f \right\|_\diamond \leq d_{\mathcal R} \left\| \Xi_f (\rho_{\mathcal R'f} - \sum_\theta p_\theta \rho^\theta_{\mathcal R'}\otimes \rho^\theta_f)\right\|_1
  \end{equation*}
  which implies, as the diamond norm is the result of an optimization over all input states and because of Equation~(\ref{eq:bound_choi}), that
  \begin{equation}
    \forall \sigma \in \mathcal D(\mathcal R), \ \left\|\Lambda^{\Xi_f}_f(\sigma) - \Gamma^{\Xi_f}_f(\sigma)\right\|_1 \leq d_{\mathcal R} \sqrt{2\ln(d_\mathcal R) \frac{w_f}{w_q}}.\label{eq:almost}
  \end{equation}

  We almost arrive at our result and just need to give a more
  explicit interpretation to both states in the above equation.
  $\Lambda^{\Xi_f}_f(\sigma) = \sum_{\alpha}\tr(F^\alpha_f
  \Lambda(\sigma))\ketbra{\alpha}$ is the state obtained after
  measuring $\Lambda(\sigma)$ using $F_f$
  acting on subset $f$ of size $w_f$. To interpret the state
  $\Gamma^{\Xi_f}_f(\sigma)$, recall that the output of a given
  channel $\Phi$ from $\mathcal R$ to $f$ can be inferred from its
  corresponding Choi-state $\rho^\Phi_{\mathcal R'f}$, using the simple
  identity $\Phi(\sigma) = \tr_{\mathcal R'} (\rho^{\Phi}_{\mathcal
    R'f}\sigma^T )$.  Therefore, we have
  \begin{align*}
    \Gamma^{\Xi_f}_f(\sigma)
    & = \tr_{\mathcal R'}\left(\sum_\theta p_\theta \rho^\theta_{\mathcal R'} \sigma^T\right) \otimes \left(\sum_\alpha \tr(F^\alpha_f \rho^\theta_f) \ketbra{\alpha}\right) \\
    & = \tr_{\mathcal R'}\left(\sum_\theta \tr_{f \bar f q} (T^\theta_q \rho) \sigma^T \right) \otimes \left(\sum_\alpha \tr(F^\alpha_f \rho^\theta_f) \ketbra{\alpha}\right) \\
    & = \tr \left(\sum_\theta T^\theta_q \Lambda(\sigma)\right) \otimes \left(\sum_\alpha \tr(F^\alpha_f \rho^\theta_f) \ketbra{\alpha}\right),
  \end{align*}
  where we use Equation~(\ref{eq:meas_q}) to replace $p_\theta
  \rho^\theta_{\mathcal R'}$ with $\tr_{f \bar f q} T^\theta_q
  \Lambda(\rho)$.  Note that for the states $\rho^\theta_f$ for varying $\theta$ are independent of %HO: we have several states that are all independent.
  $\sigma$ so that $\tr(F^\alpha_f \rho^\theta_f)$ can be rewritten as %HO: Missing subscript f
  $p(\alpha| F_f, \theta)$, a classical probability distribution for %HO: Missing subscript f
  $\alpha$, given $F_f$ and $\theta$. Equation~(\ref{eq:almost}) can now be rewritten as %HO: I think given is clear enough. "the fixed value" conveyed the impression that we had to fix these values to get the result, which is not the case. It just needs to be given (but could be anything). %HO: Missing subscript f
  \begin{equation*}
    \forall \sigma \in \mathcal D(\mathcal R), \ \sum_{\alpha} \left\|
    \tr(F^\alpha_f \Lambda(\sigma))
    - \tr \left(\sum_\theta T^\theta_q \Lambda(\sigma)\right) \tr(F^\alpha_f \rho^\theta_f)
    \right\|_1 \leq d_{\mathcal R} \sqrt{2\ln(d_\mathcal R) \frac{w_f}{w_q}}.
  \end{equation*}

  All derivations above are independent from the choice of subset $f$
  and of quantum-classical channel $\Xi_f$---or, equivalently, of $F_f$---as long as $w_f$ and $w_q$ are chosen such that $\delta =
  d_{\mathcal R} \sqrt{2\ln(d_\mathcal R) \frac{w_f}{w_q}}$ is
  small. This concludes the proof as
\begingroup\makeatletter\def\f@size{9.5}\check@mathfonts
\def\maketag@@@#1{\hbox{\m@th\normalsize\normalfont#1}}%
  \begin{equation*}
    \forall \varrho \in \mathcal D, \ \forall \alpha, \ \left| \tr(F^\alpha_f \varrho) - \sum_\theta p(\alpha | F_f, \theta) \tr( T^\theta_q \varrho) \right|
    \leq \sum_{\tilde \alpha} \left| \tr(F^{\tilde\alpha}_f \varrho) - \sum_\theta p(\tilde \alpha| F_f, \theta) \tr( T^\theta_q \varrho) \right|
    \leq \delta.
  \end{equation*}
  \endgroup
\end{proof}

In effect, Proposition~1 of~\cite{QR21:emergent} identifies a fraction
of $\mathcal S$ that contains all the information that can be accessed
about the initial state $\sigma$ after $\Lambda$ has taken place. This
then decoheres all other possible smaller fractions $f$ of $\mathcal
S$ disjoint from $q$. The consequence is that any measurement on such
fractions can be implemented by first measuring $q$ and then by
post processing classically the result depending on the choice of
measurement $F_f$ on $f$. This being true for all sufficiently small
fractions $f$ and any measurement on $F_f$, we recover the
$\delta$-approximate joint measurability for such measurements over
the states that are dynamically created by $\Lambda$ from any initial 
state $\sigma$. Hence, when observers are restricted to fractions $f$,
quantum Darwinism yields objective properties of the system that can
all be understood as stemming from a single classical measurement on
the Markov blanket $q$.

\section{Conclusions}
\label{sec:org779a086}
The last section shows that generic evolutions of quantum many-body
systems do systematically generate Markov blankets that capture all
correlations between fragments of \(\mathcal S\). As a consequence,
measurement results obtained by observers measuring fragments of
\(\mathcal S\) outside Markov blankets can be explained using classical
correlations only. This implies that the non tuple-covering condition
is generically satisfied for all partitions of \(\mathcal S\) that
contain the Markov blanket. Hence, while the non tuple-covering
condition seemed an {a priori} more complex requirement to satisfy
compared to the non pair-covering, as soon as Markov blankets are
outside the reach of observers, quantum Darwinism can be invoked to
recover robust classical objective properties of quantum many-body
systems. This is a situation similar to that of system--environment
settings, where Markov blankets are created generically by quantum
evolutions and are responsible for objective classical
\mbox{reality \cite{BPH15:generic, QR21:emergent}.}  Further analysis of the
precise location and accessibility of Markov blankets in realistic
settings is left for future work.

\vspace{6pt}

\paragraph{Aknowledgements.}{I am extremely grateful to Wojciech Zurek for having introduced me to quantum information, for his guidance, stimulating discussions and the wonderful time at the Los Alamos National Laboratory.}

\bibliographystyle{alpha}
\bibliography{../qubib/qubib}

\end{document}